\newtheorem{theorem}{Theorem}
\newtheorem{proposition}{Proposition}
\newtheorem{corollary}{Corollary}
\theoremstyle{definition}
\newtheorem{definition}{Definition}
\newcommand{\SOS}{\text{S1S}}
\newcommand{\SOSl}{\text{S1S}(\in,<)}
\newcommand{\SOSs}{\text{S1S}(\in,\bs{s})}
\newcommand{\ESOSl}{\exists\text{S1S}(\in,<)}
\newcommand{\ESOSs}{\exists\text{S1S}(\in,\bs{s})}
\newcommand{\ria}{\rightarrow}
\newcommand{\NN}{\mathbb{N}}
\newcommand{\sub}{\subseteq}
\newcommand{\bs}[1]{\boldsymbol{#1}}
\newcommand{\tvec}[2]{\begin{pmatrix} #1\\ #2 \end{pmatrix}}
\newcommand{\aut}{(Q,\Sigma,\Delta,q_{\mathit{initial}},F)}
\title{The existential fragment of $\SOS$ over $(\in,\bs{s})$ is the co-B\"uchi languages}
\author{Egor Ianovski}
\begin{document}
  \maketitle

  \begin{abstract}
   \noindent  B\"uchi's theorem, in establishing the equivalence between languages definable
    in $\SOS$ over $(\in,<)$ and the $\omega$-regular languages also demonstrated that
    $\SOS$ over $(\in,<)$ is no more expressive than its existential fragment. It is also easy 
    to see that $\SOS$ over $(\in,<)$ is equi-expressive with $\SOS$ over $(\in,\bs{s})$.
    However, it is not immediately obvious whether it is possible to adapt B\"uchi's argument
    to establish equivalence between expressivity in $\SOS$ over $(\in,\bs{s})$ and its
    existential fragment. In this paper we show that it is not: the existential fragment of
    $\SOS$ over $(\in,\bs{s})$ is strictly less expressive, and is in fact equivalent to the
    co-B\"uchi languages.
  \end{abstract}

  \section{Preliminaries}

  \subsection{Second order theory of one successor}

  \begin{definition}[S1S syntax]
    We introduce the following components:
    \begin{itemize}
      \item
        A set of first order variables, denoted by lower case letters, possibly with subscripts.
      \item
        A set of second order variables, denoted by upper case letters, possibly with subscripts.
    \end{itemize}
    $\SOSl$ has the following set of well formed formulae:
    $$\varphi::=t<t\ |\ t\in X\ |\ \varphi\wedge\varphi\ |\ \neg\varphi\ |\ \exists x_i\varphi\ |\ \exists X_i\varphi.$$
    $t$ is understood to range over terms, which in this case are just the first order variables, and $X$ to range over second order variables.

    $\ESOSl$ consists of those formulae with an initial block of second order quantifiers over a first order matrix:
    $$\varphi_1::=t<t\ |\ t\in X\ |\ \varphi\wedge\varphi\ |\ \neg\varphi\ |\ \exists x_i\varphi.$$
    $$\varphi::=\varphi_1\ |\ \exists X_i\varphi.$$

    In $\SOSs$ we have two cases for terms, letting $x$ range over first order variables:
    $$t::=x\ |\ \bs{s}t.$$
    $$\varphi::=t\in X\ |\ \varphi\wedge\varphi\ |\ \neg\varphi\ |\ \exists x_i\varphi\ |\ \exists X_i\varphi.$$

   $\ESOSs$ follows analogously:
    $$t::=x\ |\ \bs{s}t.$$
    $$\varphi_1::=t\in X\ |\ \varphi\wedge\varphi\ |\ \neg\varphi\ |\ \exists x_i\varphi.$$
    $$\varphi::=\varphi_1\ |\ \exists X_i\varphi.$$

  The abbreviations $\vee,\ria$ and $\forall$ are defined in the usual way.
  \end{definition}

  \begin{definition}[S1S semantics]
    A formula of S1S $\varphi(\overline{x_i},\overline{X_i})$ with $n$ free first order variables and $m$ free second order
    variables will be evaluated on $(\overline{a_i},\overline{A_i})$, with $a_i\in\NN$ and $A_i\in 2^\NN$. Satisfaction is
    defined inductively:
    \begin{align*}
      (\overline{a_i},\overline{A_i})\vDash\bs{s}^kx_i\in X_j&\iff a_i+k\in A_j.\\
      (\overline{a_i},\overline{A_i})\vDash x_i<x_j&\iff a_i<a_j.\\
      (\overline{a_i},\overline{A_i})\vDash\neg\varphi&\iff (\overline{a_i},\overline{A_i})\nvDash\varphi.\\
      (\overline{a_i},\overline{A_i})\vDash\varphi\wedge\psi&\iff (\overline{a_i},\overline{A_i})\vDash\varphi\text{ and }(\overline{a_i},\overline{A_i})\vDash\psi.\\
      (\overline{a_i},\overline{A_i})\vDash\exists x_i\varphi&\iff \exists b_i\in\NN:(\overline{a_i}[a_i\leftarrow b_i],\overline{A_i})\vDash\varphi.\\
      (\overline{a_i},\overline{A_i})\vDash\exists X_i\varphi&\iff \exists B_i\in2^\NN:(\overline{a_i},\overline{A_i}[A_i\leftarrow B_i])\vDash\varphi.\\
    \end{align*}
    A model will be represented by an infinite word over $\{0,1\}^{n+m}$, with the $i$th component being the characteristic word of the $i$th set (treating numbers as singleton sets). We will conflate a model with its representation in the sequel.
  \end{definition}

  Note that the set of models satisfying a formula $\varphi(\overline{x_i},\overline{X_i})$ thereby induce a language over $\{0,1\}^{n+m}$. The notion of
  an $L$ definable language, with $L\in\{\SOSl,\SOSs,\ESOSl,\ESOSs\}$, thus follows in the obvious way: a language is $L$ definable just if it describes the
  models of some formula $\varphi(\overline{x_i},\overline{X_i})$ of $L$.

  \begin{proposition}
    The $\SOSl$ definable languages are precisely the $\SOSs$ definable languages.
  \end{proposition}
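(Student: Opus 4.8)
The plan is a mutual syntactic translation: in each direction I would exhibit a rewriting of formulae that preserves the free first- and second-order variables and the set of satisfying models, so that the two classes of definable languages coincide. The two directions are asymmetric — passing from $\SOSs$ to $\SOSl$ we must eliminate the function symbol $\bs{s}$, and passing from $\SOSl$ to $\SOSs$ we must eliminate the relation $<$. In both cases the translation acts as the identity on Boolean connectives and quantifiers, so all the content is in the treatment of atomic formulae, and correctness then follows by a routine induction on formula structure.

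For $\SOSs\subseteq\SOSl$, I would first note that successor is $<$-definable: let $\mathrm{succ}(x,y)$ abbreviate $x<y\wedge\neg\exists z(x<z\wedge z<y)$, which holds exactly when $y=x+1$. An atomic formula $\bs{s}^k x\in X$ of $\SOSs$ is then translated to $\exists y_1\cdots\exists y_k\big(\mathrm{succ}(x,y_1)\wedge\mathrm{succ}(y_1,y_2)\wedge\cdots\wedge\mathrm{succ}(y_{k-1},y_k)\wedge y_k\in X\big)$ with the $y_i$ chosen fresh (for $k=0$ the atom $x\in X$ is already in $\SOSl$). Since the $y_i$ are uniquely determined by $x$, these existential quantifiers do not enlarge the set of models, so the local rewriting is truth-preserving, and carrying it through the inductive structure yields an equivalent $\SOSl$ formula with the same free variables.

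For $\SOSl\subseteq\SOSs$ the atom $x\in X$ is already legal in $\SOSs$ (it is $\bs{s}^0x\in X$), so only $x<y$ needs rewriting. Here I would use the order-theoretic fact that $\{n\in\NN:n\ge x+1\}$ is the least subset of $\NN$ containing $\bs{s}x$ and closed under $\bs{s}$; hence $x<y$ holds iff $y$ lies in every such set. Accordingly I translate $x<y$ to
$$\forall Z\Big(\big(\bs{s}x\in Z\wedge\forall w(w\in Z\ria\bs{s}w\in Z)\big)\ria y\in Z\Big),$$
with $Z,w$ fresh; this is a legitimate $\SOSs$ formula since $\forall$ and $\ria$ are the usual abbreviations and $\bs{s}x\in Z$, $\bs{s}w\in Z$ are atomic in $\SOSs$. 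It is essential that we work with the full logics rather than the existential fragments, since this translation introduces a genuine universal second-order quantifier; this is exactly the asymmetry that the rest of the paper exploits. Again the rewriting is truth-preserving, so induction on structure gives an equivalent $\SOSs$ formula with the same free variables.

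The only real care needed — the sole ``obstacle'', such as it is — is bookkeeping: choosing the auxiliary variables $y_i$, $Z$, $w$ to avoid capture, checking the degenerate case $k=0$, and verifying the order-theoretic characterisation of $<$ used above. With that in place the two inductions are entirely routine, and since a language is $\SOSl$-definable (respectively $\SOSs$-definable) precisely when it is the model set of some formula, equi-definability of formulae immediately yields equi-definability of languages.
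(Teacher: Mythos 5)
Your proposal is correct and takes essentially the same approach as the paper: a mutual syntactic translation in which $<$ is defined in $\SOSs$ by a universally quantified second-order closure property and $\bs{s}$ is eliminated in $\SOSl$ by a first-order definition of the successor relation. If anything your version is the more careful one — your formula $\forall Z\big((\bs{s}x\in Z\wedge\forall w(w\in Z\ria\bs{s}w\in Z))\ria y\in Z\big)$ genuinely captures strict inequality, whereas the paper's replacement for $x<y$ as written actually defines $x\leq y$.
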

  \begin{proof}
    Every instance of $x<y$ can be replaced with $\forall X(\forall z(\bs{s}z\in X\ria z\in X)\ria (y\in X\ria x\in X) )$. In every formula involving $\bs{s} t$, $\bs{s} t$ can be replaced by $x$ and $\forall y(t<y\ria(x<y\vee x=y))$.
  \end{proof}
  \subsection{Automata on infinite words}

    \begin{definition}[B\"uchi automata]
      A \emph{B\"uchi automaton} is a 5-tuple $A=\aut$, with $Q$ and $F$ finite sets, $\Delta\sub Q\times\Sigma\times Q$, $q_{\mathit{initial}}\in Q$ and $F\sub Q$.

      The language accepted by $A$ is the set of all $\alpha\in\Sigma^\omega$ such that there exists a $\rho\in Q^\omega$ such that $\rho_0=q_{initial}$, and
      $(\rho_i,\alpha_i,\rho{i+1})\in\Delta$ such that for infinitely many $i$, $\rho_i\in F$. A language accepted by some B\"uchi automaton is called \emph{B\"uchi recognisable}.
    \end{definition}

    We remind the reader of the following three standard results:
   
    \begin{theorem}
      The B\"uchi recognisable languages are precisely the $\omega$-regular languages.
    \end{theorem}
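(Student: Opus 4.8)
The plan is to prove the two inclusions separately. Recall that $L\subseteq\Sigma^\omega$ is \emph{$\omega$-regular} if it is a finite union of languages $U\cdot V^\omega$ with $U,V\subseteq\Sigma^*$ regular and $\epsilon\notin V$; thus a finite union of $\omega$-regular languages is $\omega$-regular by definition, and it remains to handle a single $U\cdot V^\omega$ in one direction and a single B\"uchi automaton in the other.

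For \emph{$\omega$-regular $\Rightarrow$ B\"uchi recognisable}, I would first note that the B\"uchi recognisable languages are closed under finite union: given automata $A_1,A_2$, take their disjoint union together with a fresh initial state carrying the outgoing transitions of $q^1_{\mathit{initial}}$ and $q^2_{\mathit{initial}}$, which is harmless since for $\omega$-words the start state need never be revisited. It then suffices to show each $U\cdot V^\omega$ is B\"uchi recognisable. Fixing $\epsilon$-free NFAs $B_U,B_V$ for $U$ and $V$, I would build a B\"uchi automaton whose state set is essentially the disjoint union of those of $B_U$ and $B_V$: it simulates $B_U$ from its initial state, may pass from a final state of $B_U$ into the initial state of $B_V$ while consuming the next letter, and from a final state of $B_V$ may likewise re-enter $B_V$'s initial state on the next letter; the unique B\"uchi-accepting state is this re-entry point. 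Accepting runs then correspond exactly to factorisations $uv_1v_2\cdots$ of the input with $u\in U$ and each $v_i\in V\setminus\{\epsilon\}$, so the automaton recognises $U\cdot V^\omega$.

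For the converse, let $A=\aut$. For $p,q\in Q$ let $W_{p,q}\subseteq\Sigma^*$ be the set of finite words admitting a run of $A$ from $p$ to $q$, and $W_{p,q}^{+}$ the nonempty such words; each is regular, being recognised by the NFA obtained from $A$ by resetting the initial state to $p$ and the final set to $\{q\}$. I claim that
$$L(A)=\bigcup_{q\in F}W_{q_{\mathit{initial}},q}\cdot\bigl(W_{q,q}^{+}\bigr)^\omega.$$
The inclusion $\supseteq$ is clear: splicing a run $q_{\mathit{initial}}\to q$ onto infinitely many nonempty runs $q\to q$ yields a run on the concatenated (necessarily infinite) word that visits $q\in F$ infinitely often. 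For $\subseteq$, take $\alpha\in L(A)$ with accepting run $\rho$; since $F$ is finite some $q\in F$ satisfies $\rho_i=q$ at infinitely many positions $0\le i_0<i_1<\cdots$, and the factorisation $\alpha=\alpha[0,i_0)\,\alpha[i_0,i_1)\,\alpha[i_1,i_2)\cdots$ exhibits $\alpha$ in the right-hand side, the first factor lying in $W_{q_{\mathit{initial}},q}$ and each later factor in $W_{q,q}^{+}$ (nonempty as $i_{k+1}>i_k$). Since $\epsilon\notin W_{q,q}^{+}$, each term of the union is $\omega$-regular, and hence so is the whole.

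I do not anticipate a real obstacle: both directions are explicit constructions, and — in contrast to the closure of B\"uchi automata under complementation — no Ramsey-type or other compactness argument is needed here, because an accepting run itself delivers the required $\omega$-factorisation. The only points demanding a little care are the treatment of $\epsilon$ in the $V^\omega$ construction (to guarantee the $v_i$ are nonempty) and routing the inter-component moves through letter-consuming transitions, so that the construction stays inside the $\epsilon$-free, relational automaton model of the definition.
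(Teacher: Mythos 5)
The paper does not prove this theorem: it is listed among the ``three standard results'' the reader is merely reminded of, so there is no proof of record to compare against. Your argument is the standard textbook one --- closure under finite union plus a construction for a single $U\cdot V^\omega$ in one direction, and the decomposition $L(A)=\bigcup_{q\in F}W_{q_{\mathit{initial}},q}\cdot\bigl(W^{+}_{q,q}\bigr)^\omega$ in the other --- and the converse direction as you state it is complete and correct.

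The one place where your sketch, read literally, fails is the inter-component transitions in the $U\cdot V^\omega$ construction. A transition that leaves a final state of $B_U$ (or of $B_V$) and lands \emph{in the initial state of $B_V$} while consuming a letter discards that letter: the next round of $B_V$ then starts on the second symbol of $v_i$, so the automaton as described accepts $U\cdot(\Sigma\cdot V)^\omega$ rather than $U\cdot V^\omega$ (take $V=\{a\}$ over $\Sigma=\{a,b\}$ to see the difference). The standard repair, which you gesture at in your closing paragraph, is to make the jump simulate the first step of $B_V$: from $f\in F_U$ (resp.\ $f\in F_V$) on letter $a$, go to every state $q$ with $(q_V,a,q)\in\Delta_V$, where $q_V$ is $B_V$'s initial state; acceptance is then marked by routing these restart moves through a tagged accepting copy of the target states (equivalently, whenever $(p,a,f)\in\Delta_V$ with $f\in F_V$, add $(p,a,\hat{q}_V)$ for a fresh accepting copy $\hat{q}_V$ of $q_V$ with the same outgoing transitions). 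With that adjustment accepting runs correspond exactly to factorisations $uv_1v_2\cdots$ with $u\in U$ and $v_i\in V\setminus\{\epsilon\}$, as you intend, and the proof goes through.
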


    \begin{theorem}[B\"uchi's theorem 1\footnote{Strictly speaking, we are not referring to the result in \cite{Buchi1990}, but rather to the way this result is presented in, for example,
    the University of Oxford course on Automata, Logic and Games.}]\label{thm:oneway}
      Every $\SOSl$ definable language is B\"uchi recognisable.
    \end{theorem}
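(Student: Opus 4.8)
The plan is to proceed by structural induction on the formula, associating to every $\SOSl$ formula $\varphi$ with free variables among $x_1,\dots,x_n,X_1,\dots,X_m$ a B\"uchi automaton $A_\varphi$ over $\{0,1\}^{n+m}$ whose language is exactly the set of words that encode an assignment $(\overline{a_i},\overline{A_i})$ in which every first-order coordinate is the characteristic word of a genuine singleton and for which $(\overline{a_i},\overline{A_i})\vDash\varphi$. To keep a uniform alphabet across subformulae I would fix at the outset the full list of variables occurring in the top-level formula and handle coordinates not appearing in a given subformula by cylindrification (an automaton that simply ignores those coordinates), so that each inductive step only ever combines automata over one common alphabet. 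I would also fix once and for all a small automaton $A_{\mathrm{sing}}$ checking that a designated coordinate is the characteristic word of a singleton; a product of $n$ copies of it enforces well-formedness of the first-order coordinates, and the language it recognises is manifestly $\omega$-regular.

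For the base cases, $x_i\in X_j$ is recognised by an automaton that scans until the unique $1$ in coordinate $i$, rejecting unless coordinate $i$ is a singleton, and checks that coordinate $j$ carries a $1$ at that position; $x_i<x_j$ is recognised by an automaton that encounters the $1$ of coordinate $i$ strictly before the $1$ of coordinate $j$. In each case one intersects with the appropriate copies of $A_{\mathrm{sing}}$, and the resulting languages are plainly B\"uchi recognisable. For the inductive step I would invoke the standard closure properties of the B\"uchi recognisable languages, which by the first reminded theorem coincide with the $\omega$-regular languages. Conjunction $\varphi\wedge\psi$ is intersection, via the usual product construction with a flag alternately waiting for accepting states of the two factors. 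Existential quantification $\exists x_i\varphi$ and $\exists X_i\varphi$ is projection: delete coordinate $i$ from the alphabet and let the automaton guess its bits; this preserves B\"uchi recognisability, and for $\exists x_i$ the surviving first-order coordinates remain constrained to be singletons, so the well-formedness invariant is maintained. Negation $\neg\varphi$ is handled by observing that $A_{\neg\varphi}$ must accept the well-formed assignments \emph{not} satisfying $\varphi$, i.e.\ recognise $(\text{well-formed assignments})\cap\overline{L(A_\varphi)}$; this reduces to closure of the $\omega$-regular languages under complementation together with the regularity of the set of well-formed assignments noted above. The abbreviations $\vee,\ria,\forall$ need no separate treatment, as they expand into $\neg,\wedge,\exists$.

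The main obstacle is the negation case: closure of the B\"uchi recognisable languages under complementation is the one substantial ingredient, all of the other constructions being routine product and projection arguments, and I would either cite it as part of the standard theory of $\omega$-regular languages or sketch the Ramsey-based argument. Note that complementation genuinely cannot be avoided, since the $\forall$ over set variables that one would like to treat directly is itself $\neg\exists\neg$. A second, purely bookkeeping, point is the consistent management of the alphabet as the free-variable list changes under quantification; the cylindrification convention is designed precisely to absorb this, so that I never have to argue about automata over mismatched alphabets.
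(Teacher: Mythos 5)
The paper states this theorem as one of three standard results and gives no proof of its own, so there is nothing to compare against line by line. Your argument is the canonical one — structural induction with singleton-encoded first-order variables, product for conjunction, projection for the existential quantifiers, and Büchi complementation (intersected with the regular set of well-formed assignments) for negation — and it is correct as outlined, with the one genuinely non-routine ingredient, closure under complementation, properly identified and isolated. The only caveat is that a full writeup would need to either prove or explicitly cite that complementation theorem, since everything else in the induction is elementary by comparison.
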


    \begin{theorem}[B\"uchi's theorem 2]\label{thm:otherway}
      Every B\"uchi recognisable language is $\ESOSl$ definable.
    \end{theorem}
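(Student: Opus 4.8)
The plan is to directly simulate a B\"uchi automaton by an $\ESOSl$ formula that existentially guesses a run and then checks, with a first-order matrix, that the guess is a legitimate accepting run. Fix a B\"uchi automaton $A=\aut$ over $\Sigma=\{0,1\}^\ell$ recognising the language in question, and enumerate $Q=\{0,\dots,k-1\}$ with $q_{\mathit{initial}}=0$. I will treat the $\ell$ tracks of a model as free second-order variables $Y_1,\dots,Y_\ell$; first-order tracks cause no difficulty, as one may use ``$x_t=i$'' in place of ``$i\in Y_t$'' throughout. The formula will have the shape $\varphi:=\exists Z_0\cdots\exists Z_{k-1}\,\psi(\overline{Y},\overline{Z})$ with $\psi$ first order, so it lies in the existential fragment; here $i\in Z_j$ is intended to encode ``the guessed run visits state $j$ at time $i$.''

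Next I would take $\psi$ to be the conjunction of four first-order conditions. \emph{Run-hood}: every position lies in exactly one of $Z_0,\dots,Z_{k-1}$. \emph{Start}: the least position---``$x$ is least'' being $\neg\exists y\,(y<x)$---lies in $Z_0$. \emph{Transitions}: for all $i$ and all $i'$, if $i'$ is the immediate successor of $i$ (that is, $i<i'\wedge\neg\exists z\,(i<z\wedge z<i')$) then $\bigwedge_{(j,a,j')\notin\Delta}\neg\big(i\in Z_j\wedge i'\in Z_{j'}\wedge\lambda_a(i)\big)$, where $\lambda_a(i):=\bigwedge_{t:\,a_t=1}i\in Y_t\wedge\bigwedge_{t:\,a_t=0}\neg(i\in Y_t)$ says that the $\ell$ tracks spell the bits of $a$ at position $i$; as $A$ is fixed, this is a finite conjunction. \emph{Acceptance}: $\forall i\,\exists i'\,\big(i<i'\wedge\bigvee_{j\in F}i'\in Z_j\big)$, i.e.\ some final state recurs infinitely often. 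Each conjunct is visibly first order over $(\in,<)$, so $\varphi\in\ESOSl$.

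It then remains to verify that $\overline{Y}\vDash\varphi$ iff $\overline{Y}\in L(A)$, which is a routine unwinding of the definitions: by \emph{Run-hood} a witnessing tuple $(Z_j)_{j<k}$ for the leading quantifiers is precisely the same data as a function $\rho\colon\NN\to Q$, and the remaining three conjuncts say exactly that $\rho$ is an accepting run of $A$ on $\overline{Y}$. I do not expect a genuine obstacle---the content is entirely bookkeeping---but the single point needing care is that $\psi$ must remain first order: the run is carried by the guessed $\overline{Z}$, the reading of the input letter by the free $\overline{Y}$, and the $\omega$-acceptance condition is phrased with a plain $\forall\exists$, so no second-order quantifier intrudes into the matrix. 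It is precisely the finiteness of $Q$ and of $\Sigma$ that makes this work, and hence the existential fragment sufficient for every recognisable language. (One could instead route through the characterisation of B\"uchi recognisability by $\omega$-regular expressions and build $\varphi$ compositionally---concatenation and $\omega$-iteration being expressible by first-order quantifiers that delimit blocks---but the automaton simulation is shorter.)
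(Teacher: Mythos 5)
Your proposal is correct and follows essentially the same route as the paper: existentially guess the run as a tuple of second-order "state" sets, then assert partition, initialisation, local transition consistency, and recurrence of $F$ in a first-order matrix. The only differences are cosmetic (you express "least position" and the acceptance condition directly via $<$ rather than via auxiliary predicates, and state the transition clause contrapositively), and your version is, if anything, more careful about keeping the matrix strictly first order than the formula displayed in the paper.
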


    The crux of the proof is that for every $A=\aut$ we can construct the following formula defining the same language:

    \begin{align*}
      \varphi_A(\overline{X_a})=\exists Y_1\dots\exists Y_n
      \begin{pmatrix}
      &\mathit{partition}(Y_1,\dots,Y_n)\\
      \wedge& \exists Z\exists x \big(x\in Z\wedge\forall y( sy\notin Z)\wedge x\in Y_1\big)\\
      \wedge& \forall x \bigvee_{(i,a,j)\in\Delta} (x\in Y_i\wedge x\in X_a\wedge x\in Y_j)\\
      \wedge& \bigvee_{q\notin F} \mathit{infinite}(Y_q)
      \end{pmatrix}.
    \end{align*}
    Intuitively, $i\in Y_j$ just if the automaton is in state $j$ at step $i$. The first line states that the automaotn is at precisely one state at every step,
    the second that the automaton starts in the initial state, the third that the transition relation is respected and the last that some accepting state is visited infinitely often.

    \begin{definition}[co-B\"uchi automata]
      A \emph{co-B\"uchi automaton} is a 5-tuple $A=\aut$, with $Q$ and $F$ finite sets, $\Delta\sub Q\times\Sigma\times Q$, $q_{\mathit{initial}}\in Q$ and $F\sub Q$.

      The language accepted by $A$ is the set of all $\alpha\in\Sigma^\omega$ such that there exists a $\rho\in Q^\omega$ such that $\rho_0=q_{initial}$, and
      $(\rho_i,\alpha_i,\rho{i+1})\in\Delta$ such that $\rho_i\notin F$ for finitely many $i$. A language accepted by some co-B\"uchi automaton is called \emph{co-B\"uchi recognisable}.
    \end{definition}

    \begin{proposition}
      Every co-B\"uchi recognisable language is $\ESOSs$ definable.
    \end{proposition}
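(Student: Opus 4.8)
The plan is to mimic the formula $\varphi_A$ from Büchi's construction, keeping the clauses that assert ``$Y_1,\dots,Y_n$ encode a valid run of $A$'' essentially verbatim --- these already live in $\ESOSs$ once the transition clause is written with $\bs{s}x\in Y_j$ and the existential set quantifier $\exists Z$ (used, together with $\forall y(\bs{s}y\notin Z)$, to single out position $0$ without equality or $<$) is pulled to the front --- and replacing only the acceptance clause. Whereas Büchi needs $\bigvee_{q\notin F}\mathit{infinite}(Y_q)$, which is not expressible existentially over successor, the co-Büchi condition ``the run leaves $F$ only finitely often'' is the same as ``from some point on the run stays in $F$'', and the latter admits a finite witness.

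The witness is a set $T$ playing the role of a final segment $\{m,m+1,\dots\}$. The observation I would isolate as a small lemma is that a nonempty $T\sub\NN$ closed under successor is exactly such a final segment: letting $m=\min T$, closure under $\bs{s}$ gives $\{m,m+1,\dots\}\sub T$ by induction, while minimality of $m$ gives the reverse inclusion. Both ``$T$ is nonempty'' ($\exists x\, x\in T$) and ``$T$ is closed under successor'' ($\forall x(x\in T\ria \bs{s}x\in T)$) are first order over $(\in,\bs{s})$, so the replacement clause is
$$\exists T\Big(\exists x\, x\in T\ \wedge\ \forall x(x\in T\ria \bs{s}x\in T)\ \wedge\ \forall x\big(x\in T\ria\bigvee_{q\in F}x\in Y_q\big)\Big),$$
and after moving $\exists T$ and $\exists Z$ to the front the resulting $\psi_A(\overline{X_a})$ is syntactically in $\ESOSs$.

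For correctness I would check both directions in the usual way. If $\alpha\in L(A)$, fix an accepting run $\rho$, let $Y_q$ be the set of positions at which $\rho$ is in state $q$, let $Z=\{0\}$, and let $T=\{m,m+1,\dots\}$ with $m$ larger than every one of the finitely many positions at which $\rho$ is outside $F$; this assignment satisfies the matrix. Conversely, any assignment satisfying the matrix yields, via the partition/initial/transition clauses, a genuine run $\rho$ of $A$ on $\alpha$, and by the lemma the $T$-clause forces $\rho$ into $F$ at every position $\geq\min T$, so only finitely many positions lie outside $F$ and $\rho$ is accepting.

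The main obstacle is conceptual rather than computational: seeing that, unlike the Büchi acceptance condition, the co-Büchi one is of ``eventually always'' type and therefore admits a witness --- the tail $T$ --- definable using successor alone and placeable under an existential second-order quantifier. Once that is in hand the remainder is bookkeeping: transcribing Büchi's construction and verifying the syntactic form.
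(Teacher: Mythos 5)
Your proof takes essentially the same route as the paper: keep B\"uchi's run-encoding clauses verbatim and replace the acceptance clause by an ``eventually always in $F$'' condition witnessed by an existentially quantified successor-closed tail, then pull the new second-order $\exists$ to the front since it is not under a universal quantifier. The only differences are cosmetic --- you use a single tail $T\sub\bigcup_{q\in F}Y_q$ where the paper asserts $\mathit{finite}(Y_q)$ separately for each $q\notin F$ with one successor-closed witness apiece --- and your explicit nonemptiness conjunct $\exists x\,x\in T$ is in fact necessary (the paper's $\mathit{finite}$ predicate as printed omits it and is vacuously satisfied by the empty set), so your version is the more careful one.
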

    \begin{proof}
    Let $A=\aut$ be a co-B\"uchi automaton. Consider the following formula:
    \begin{align*}
      \varphi_A(\overline{X_a})=\exists Y_1\dots\exists Y_n
      \begin{pmatrix}
      &\mathit{partition}(Y_1,\dots,Y_n)\\
      \wedge& \exists Z\exists x \big(x\in Z\wedge\forall y( sy\notin Z)\wedge x\in Y_1\big)\\
      \wedge& \forall x \bigvee_{(i,a,j)\in\Delta} (x\in Y_i\wedge x\in X_a\wedge x\in Y_j)\\
      \wedge& \bigwedge_{q\notin F} \mathit{finite}(Y_q)
      \end{pmatrix}.
    \end{align*}
    The finite predicate is defined as follows:
    \begin{align*}
      \mathit{finite}(X)=\exists Y\big(\forall x (x\in Y\ria sx\in Y)\wedge\forall x(x\in Y\ria x\notin X)\big).
    \end{align*}
    Note that neither the finite nor the zero predicate in $\varphi_A(\overline{X_a})$ is
    bound by a universal quantifier, so they can be pulled out front.
    \end{proof}

    To conclude this section, in Figure~\ref{fig:relation} we illustrate both the known relations between the languages so far introduced and the
    result we will prove in the next section.

    \begin{figure}
    \begin{center}
    \begin{tikzpicture}[->,>=stealth',shorten >=1pt,auto,
                    semithick]
      \node (A) at (0,0) {$\SOSl$};
      \node (B) at (0,-4) {$\ESOSl$};
      \node (C) at (8,0) {$\SOSs$};
      \node (D) at (8,-4) {$\ESOSs$};
      \node (E) at (4,-2) {$\omega$-regular};
      \node (F) at (4,-5) {co-B\"uchi};

      \path (A) edge node [above right] {\emph{Theorem~\ref{thm:oneway}}} (E)
	     (E) edge node [below right] {\emph{Theorem~\ref{thm:otherway}}} (B)
	     (B) edge node [right] {\emph{Inclusion}} (A)
	     (A) edge node {\emph{Mutual definability}} (C)
	     (C) edge node {} (A)
	     (D) edge node [right] {\emph{Inclusion}} (C)
	     (F) edge [bend right] node [right] {\emph{Inclusion}} (E)
	     (D) edge node [below right] {\emph{Theorem~\ref{thm:Conrad}}} (F)
	     (F) edge node [right] {} (D);
    \end{tikzpicture}
    \end{center}
    \caption{Inclusion between languages studied.}\label{fig:relation}
    \end{figure}
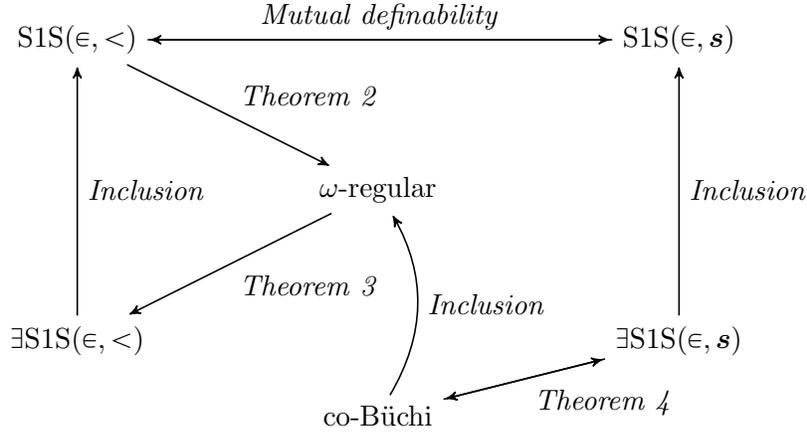

  \section{Proof}

    \begin{theorem}\label{thm:Conrad}
      The $\ESOSs$ definable languages are precisely the co-B\"uchi recognisable languages.
    \end{theorem}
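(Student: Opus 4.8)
Only one direction needs proof: by the Proposition that every co-B\"uchi recognisable language is $\ESOSs$ definable, it suffices to establish the reverse inclusion. An $\ESOSs$ formula consists of a block $\exists X_1\cdots\exists X_k$ of second-order quantifiers in front of a first-order formula $\varphi_1$ over $(\in,\bs{s})$, so I would split the task into (i) every language defined by such a first-order $\varphi_1$ is co-B\"uchi recognisable, and (ii) the co-B\"uchi recognisable languages are closed under the pointwise projection $L\mapsto\{u:\exists v\ (u,v)\in L\}$ corresponding to an existential second-order quantifier. Granting both, one peels off the quantifier prefix one variable at a time using (ii), starting from the automaton for $\varphi_1$ supplied by (i), and obtains a co-B\"uchi automaton for the whole formula; combined with the Proposition this gives the stated equality.

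Claim (ii) is immediate here because the co-B\"uchi automata of this paper are nondeterministic: given a co-B\"uchi automaton for $L$ over the product alphabet, let each transition additionally guess the erased letter and keep the same set $F$; runs of the new automaton over the smaller alphabet are exactly runs of the old one over completions of the input, and the acceptance condition is untouched. (Had ``co-B\"uchi recognisable'' been read as \emph{deterministic} co-B\"uchi, one would additionally invoke the standard fact that nondeterministic co-B\"uchi automata can be determinised; but no such detour is needed.) For (i) I would first fold the free first-order variables of $\varphi_1$ into the encoding on the same footing as the set variables: the component of a first-order variable is a singleton, and the set of encodings whose first-order components are all singletons is $\omega$-regular -- it is a finite intersection of languages of the form ``at least one $1$'' (open) with ``at most one $1$'' (closed) -- hence recognised by a deterministic weak automaton. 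It is worth stressing that (i) cannot be obtained by induction on the shape of $\varphi_1$: the case of a universal first-order quantifier would call for complementing a projection, and complements of co-B\"uchi languages need not be co-B\"uchi; this is essentially why B\"uchi's argument for $(\in,<)$ does not transfer to $(\in,\bs{s})$, and why the theorem has content.

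I would therefore prove (i) in one stroke by locality. Applying Gaifman's theorem to the structure given by $\NN$ with the successor relation, the unary predicates naming the set variables, and the constants naming the free first-order variables, $\varphi_1$ becomes a Boolean combination of basic local and scattered local sentences. Over a successor structure a basic local sentence asserts that a fixed bounded factor occurs, possibly anchored to the initial segment or to one of the constants and recording the finitely many possible small mutual distances of the constants; a scattered local sentence asserts the existence of at least $t$ occurrences of a fixed bounded factor that are pairwise far apart, which -- after increasing $t$ -- is just ``at least $t'$ occurrences''. Each such basic language is clopen (an initial or anchored condition) or open (a ``$\ge t$ occurrences'' condition), so the language of $\varphi_1$ lies in the Boolean algebra generated by the open sets, hence in $\Delta^0_2$; being also $\omega$-regular, it is recognised by a deterministic weak automaton (the $\omega$-regular languages in $\Delta^0_2$ are exactly the weakly recognisable ones). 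A deterministic weak automaton is in particular a co-B\"uchi automaton with $F$ the union of its accepting components: any run eventually stays within one strongly connected component, and it meets $F$ cofinitely often exactly when that component is accepting. Intersecting with the weak language of valid encodings preserves co-B\"uchi recognisability, establishing (i).

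The step I expect to require the most care is the locality normal form itself -- getting the statement right for first-order logic over successor \emph{with constants} over $\omega$-words, including the boundary behaviour at position $0$ and at the constants -- though this may instead be quoted from the known equivalence of first-order definability over successor with local threshold testability. Everything else -- closure under projection, the passage from weak to co-B\"uchi, and the reverse inclusion supplied by the Proposition -- is routine, so assembling these pieces yields the claimed equality.
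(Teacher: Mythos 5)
Your proposal is correct in substance but takes a genuinely different route from the paper. Both arguments use the earlier Proposition for one inclusion and the closure of nondeterministic co-B\"uchi automata under projection to strip off the second-order (and first-order existential) prefix; the difference lies entirely in how the first-order matrix is handled. The paper stays elementary: since every atom of $\SOSs$ mentions a single first-order variable, the matrix is a formula of purely monadic first-order logic, which admits a normal form in which each universal quantifier scopes over a quantifier-free disjunction of literals in its own variable. This turns $\forall x_i$ into a \emph{base case}, dispatched by an explicit automaton that tracks each residue class modulo the largest successor offset and intersects the results; the connectives are then handled by union, intersection and projection, and no complementation is ever needed. You instead prove in one stroke that the whole first-order part is co-B\"uchi recognisable, by quoting Gaifman/Hanf locality (equivalently, the characterisation of first-order logic over successor by local threshold testability) to place the defined language in the Boolean closure of the open sets, and then Landweber/Staiger--Wagner to convert ``$\omega$-regular and in $BC(\Sigma^0_1)$'' into a deterministic weak, hence co-B\"uchi, automaton. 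Your route is shorter and more conceptual --- it isolates the real reason the theorem holds, namely that first-order logic over successor cannot express ``infinitely often'' --- and it yields the paper's closing corollary (containment in deterministic B\"uchi $\cap$ co-B\"uchi) for free; the cost is reliance on two substantial imported theorems where the paper's induction is self-contained. Two minor points: because no atom of $\SOSs$ relates two distinct first-order variables, full Gaifman locality is overkill here --- the classical normal form for purely monadic vocabularies (exactly what the paper's footnote invokes) already gives your Boolean-combination-of-open-sets bound; and ``at least $t$ pairwise far-apart occurrences'' is not equivalent to ``at least $t'$ occurrences'' for any $t'$ (consider a tight cluster of occurrences), though this slip is harmless since the scattered condition is open either way, which is all your classification uses.
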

    \begin{proof}
      We have already seen that every language accepted by a co-B\"uchi automaton can be
      defined with a $\ESOSs$ formula. It remains to show that for every $\ESOSs$ formula
      we can construct a co-B\"uchi automaton accepting the language it defines.

      Predictably, the construction will be given by induction. However, in contrast to the usual
      nature of such proofs the brunt of the work will fall on the base cases, while the inductive
      cases will follow trivially from the closure properties of co-B\"uchi automata.

      Let $\varphi=\exists X_1\dots\exists X_m\varphi'$ be a formula of $\ESOSs$, and $\varphi'$ 
      be its first order matrix. Observe that $\varphi'$ is a formula of monadic first order logic. As
      such, we can without loss of generality assume that:
      \begin{enumerate}
        \item
          $\varphi'$ is in negation normal form.
        \item
          Every occurrence of $\forall x_i$ has in its scope a quantifier-free disjunction of terms involving $x_i$
          and no other variables.\footnote{This follows from the fact that every formula of monadic first order logic is
          equivalent to a formula where every universal quantifier appears only in subformulae of the form:
          $$\forall x (F_1x\vee\dots\vee F_n x\vee \neg G_1 x\vee\dots\vee \neg G_m x)$$
          and every existential quantifier only in subformulae of the form:
          $$\exists x (F_1x\wedge\dots\wedge F_n x\wedge \neg G_1 x\wedge\dots\wedge \neg G_m x).$$}
      \end{enumerate}
      As a result of this assumption we will consider the following base cases:
      \begin{enumerate}
        \item
          $\bs{s}^kx_i\in X_j$, $x_i$ not in the scope of a universal quantifier.
        \item
          $\bs{s}^kx_i\notin X_j$, $x_i$ not in the scope of a universal quantifier.          
        \item
          $\forall x_i (\bs{s}^{k_1}x_i\in X_{j_1}\vee\dots\vee\bs{s}^{k_p}x_i\in X_{j_p}\vee\bs{s}^{k_1'}x_i\notin X_{j_1'}\vee\dots\vee\bs{s}^{k_q'}x_i\notin X_{j_q'}).$
      \end{enumerate}
      The inductive cases will be for $\vee$, $\wedge$, first order $\exists$ and second order $\exists$; the aim
      of the induction being to establish that for every formula $\varphi$, there exists a co-B\"uchi automaton
      $A_\varphi$ that accept $\alpha$ if and only if $\alpha$ is a model satisfying $\varphi$.
      To avoid a lengthy check at the end, we will instead verify that this property holds at the end of each
      case in the proof.

      \noindent{\bf  $\bs{s^kx_i\in X_j}$:}

      \noindent For a formula of the form $\bs{s}^kx_i\in X_j$, if $k\geq 1$ construct an automaton as in Figure~\ref{fig:base}
      \begin{figure}
      \begin{tikzpicture}[->,>=stealth',shorten >=1pt,auto,node distance=2.3cm,
                    semithick]
        \node[initial,state] (A)                    {$q_0$};
	\node[state]         (B) [right of=A] 	    {$q_1$};
	\node[state]         (D) [right of=B]       {$\dots$};
	\node[state]         (C) [right of=D]	    {$q_k$};
	\node[state,accepting](E) [below of=C]       {$q_{\mathit{accept}}$};
	\node[state]         (F) [right of=C]       {$q_{\mathit{dead}}$};

	\path (A) edge [loop below] node {$\tvec{0}{*}$} (A)
		  edge              node {$\tvec{1}{*}$} (B)
	      (B) edge  	    node {$\tvec{*}{*}$} (D)
	      (D) edge  	    node {$\tvec{*}{*}$} (C)
	      (C) edge              node {$\tvec{*}{0}$} (F)
		  edge              node {$\tvec{*}{1}$} (E)
	      (E) edge [loop left] node {$\tvec{*}{*}$} (E)
	      (F) edge [loop below]  node {$\tvec{*}{*}$} (F);
    \end{tikzpicture}
    \caption{Sketch of an automaton accepting $\bs{s}^kx_i\in X_j$.}\label{fig:base}
    \end{figure}
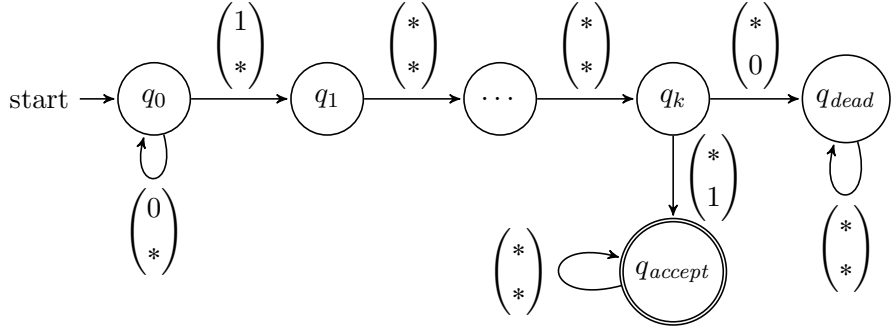

    That is to say, construct an automaton with $k+3$ states. Add a transition from $q_0$ to itself for every vector with a $0$ in the $i$th
    component, and a transition to $q_1$ for every vector with a $1$ in the $i$th component. Add a transition from $q_l$ to $q_{l+1}$
    for every vector. Add a transition from $q_k$ to $q_{final}$ for any vector with a 1 in the $n+j$th component, and a transition from
    $q_k$ to $q_{\mathit{dead}}$ for any vector with a 0 in the $n+j$th component. Add loops from $q_{\mathit{dead}}$ and $q_{\mathit{accept}}$ to themselves
    for any vector.

    In the case of $k=0$ construct an automaton with 2 states. $q_0$ has a transition to itself  for every vector with a $0$ in the $i$th
    component and a transition to $q_{\mathit{accept}}$ for every vector with a 1 in both the $i$th and $n+j$th component.

    We claim that this automaton has the following property: if $\alpha$ is a valid specification of a model\footnote{By which we mean the
    first order components contain one and only one 1 each.}, then the automaton accepts $\alpha$ if and only if it is a model satisfying
    $\bs{s}^kx_i\in X_j$.

    Suppose $\alpha$ is a word with well formed first order components. Upon reading the unique 1 in the $i$th component the
    automaton will transition to $q_1$, and $k$ steps later will enter the accepting state if it reads a 1 in the $n+j$th component, i.e.
    if $x_i+k$ is in $X_j$, as desired. In like manner, the only way the automaton can reach $q_{\mathit{accept}}$ is by reading a 1 in the $n+j$th
    component exactly $k$ steps after reading a 1 in the $i$th component, which would mean that if the input word is a model, it is a
    model satisfying $\bs{s}^kx_i\in X_j$.

    To transform this automaton into the $A_\varphi$ of the inductive hypothesis, simply intersect it with an automaton that accepts
    all valid model specifications (with $n$ first order and $m$ second order variables).

    \noindent{\bf  $\bs{s^kx_i\notin X_j}$:}

    \noindent For $k\geq 1$, construct the same automaton as before but make $q_{\mathit{dead}}$ the accepting state. For $k=0$ allow the transition
    to $q_{\mathit{accept}}$ only upon reading a 1 in the $i$th component and a 0 in the $n+j$th. Replicating the argument before, we obtain
    our result.

    \noindent{\bf  $\bs{\forall x_i}$:}

    \noindent This is the bulk of the proof, so we shall first offer some intuition.

    Suppose we have a formula $\varphi$ of the form:
    $$\forall x_i (\bs{s}^{k_1}x_i\in X_{j_1}\vee\dots\vee\bs{s}^{k_p}x_i\in X_{j_p}\vee\bs{s}^{k_1'}x_i\notin X_{j_1'}\vee\dots\vee
     \bs{s}^{k_q'}x_i\notin X_{j_q'}).$$
    Let $r$ be the number of distinct values of $k$ and $k'$, and label these values $\kappa_1,\dots,\kappa_r$ in increasing order.
    We wish to verify that every natural number satisfies $\varphi$, so let us take the perspective of the automaton and consider
    how we should verify whether 0 satisfies it. For this we must first wait for $\kappa_1$ steps, at which point there will be a number of
    terms of the form $\bs{s}^{\kappa_1}x_i\in X_{j}$ or $\bs{s}^{\kappa_1}x_i\notin X_{j}$. If the input word satisfies any one of these
    requirements, then 0 satisfies $\varphi$. If not, we cannot safely reject yet and have to wait for $\kappa_2-\kappa_1$ to check the
    next set of terms. Only if we reach $\kappa_r$ without finding any term the input word satisfies can we conclude that 0 does not
    satisfy $\varphi$ and hence reject the word.

    There is no problem, then, in designing an automaton that verifies whether any given integer satisfies $\varphi$. 
    The trouble is that we cannot pause or backtrack. While we are waiting on the outcome of 0, the entries corresponding to 1, 2, 3, will 
    have been read and we need to verify them concurrently,  and na\"ively constructing an automaton for each natural number will mean
    an infinite number of automata. However, we do not need an automaton for every single natural number: observe that if a number does
    satisfy $\varphi$, we shall know it within at most $\kappa_r$ states, as such we only need to concern ourselves with finitely many (precisely, $\kappa_r$)
    numbers at any given step.

    This leads us to the proof idea. We shall construct $\kappa_r$ automata, such that the $s$th automaton verifies that $s\mod\kappa_r$ satisfies $\varphi$. 
     By taking the intersection of all these automata the resulting machine will verify that all natural numbers satisfy $\varphi$.

    We will thus construct an automaton verifying $s\mod\kappa_r$. This is illustrated in Figure~\ref{fig:mod}.

    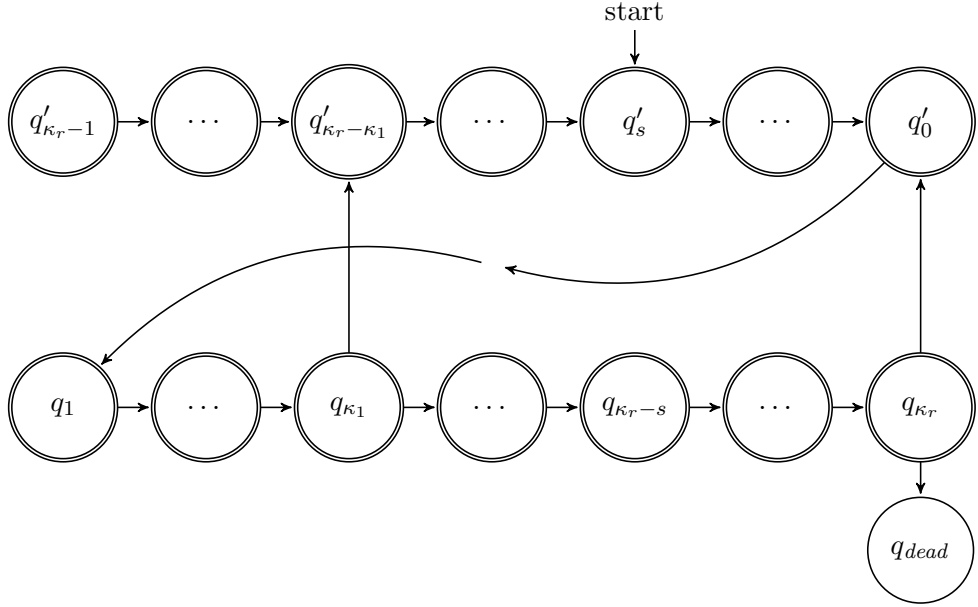
\begin{figure}
      \begin{tikzpicture}[->,>=stealth',shorten >=1pt,auto,node distance=1.9cm,
                    semithick]
        \tikzset{every state/.style={minimum size=40pt}}
        \node[state, accepting] (A)                    {$q_{\kappa_r-1}'$};
	\node[state, accepting]         (B) [right of=A] 	    {$\dots$};
	\node[state, accepting]         (C) [right of=B]       {$q_{\kappa_r-\kappa_1}'$};
	\node[state, accepting]         (D) [right of=C] 	    {$\dots$};
	\node[initial above,state, accepting]         (E) [right of=D] 	    {$q_s'$};
	\node[state, accepting]         (F) [right of=E] 	    {$\dots$};
	\node[state, accepting]         (G) [right of=F] 	    {$q_0'$};
	\node        (X) [below of=A] 	    {};
	\node        (Y) [below of=D] 	    {};
	\node[state, accepting]         (H) [below of=X] 	    {$q_1$};
	\node[state, accepting]         (I) [right of=H] 	    {$\dots$};
	\node[state, accepting]         (J) [right of=I]       {$q_{\kappa_1}$};
	\node[state, accepting]         (K) [right of=J] 	    {$\dots$};
	\node[state, accepting]         (L) [right of=K] 	    {$q_{\kappa_r-s}$};
	\node[state, accepting]         (M) [right of=L] 	    {$\dots$};
	\node[state, accepting]         (N) [right of=M] 	    {$q_{\kappa_r}$};
	\node[state]         (O) [below of=N] 	    {$q_{\mathit{dead}}$};

	\path (A) edge node {}(B)
	      (B) edge  	    node {} (C)
	      (C) edge  	    node {} (D)
	      (D) edge              node {} (E)
	      (E) edge              node {} (F)
	      (F) edge              node {} (G)
	      (G) edge[bend left]              node {} (Y)
	      (Y) edge[bend right]              node {} (H)
	      (H) edge              node {} (I)
	      (I) edge              node {} (J)
	      (J) edge              node {} (K)
	      (K) edge              node {} (L)
	      (L) edge              node {} (M)
	      (M) edge              node {} (N)
	      (J) edge              node {} (C)
	      (N) edge              node {} (G)
	      (N) edge              node {} (O);
    \end{tikzpicture}
    \caption{Sketch of an automaton accepting a word if all $s\mod\kappa_r$ satisfy the universal case.}\label{fig:mod}
    \end{figure}

    The automaton has states $\{q_0',\dots,q_{\kappa_r-1}'\}$, $\{q_1,\dots,q_{\kappa_r}\}$ and a dead state. All states
    sans the dead state are accepting. The initial state is $q_s'$. There is a transition from $q_{i+1}'$ to $q_i'$ for any vector,
    a transition from $q_0'$ to $q_1$ for any vector and a transition from $q_{u\neq\kappa_v}$ to $q_{u+1}$ for any vector.
    
    At $q_{\kappa_v}$ add a transition to $q_{\kappa_r-\kappa_v}'$ for any vector that satisfies one of the terms with $\kappa_v$
    successor operations. For example, if the terms with $\kappa_v$ successor terms in $\varphi$ are $\bs{s}^{\kappa_v}x_i\in X_3$ and
    $\bs{s}^{\kappa_v}x_i\notin X_4$ then add a transition to $q_{\kappa_r-\kappa_v}'$ for any vector with a 1 in the $n+3$th component or
    a 0 in the $n+4$th. For any other vector, add a transition from $q_{\kappa_v}$ to $q_{\kappa_v+1}$. If $\kappa_v=\kappa_r$, then
    add a transition to $q_{\mathit{dead}}$ instead.

    It is easy to see that if $\alpha$ satisfies $\varphi$, then since each $s\mod\kappa_r$ satisfies $\varphi$ the associated automaton
    will never leave the accepting zone, hence the intersection automaton will accept $\alpha$. If however $\alpha$ does not satisfy
    $\varphi$ then there must exist some number which will force an automaton to enter the dead state, causing the intersection automaton
    to reject.

    \noindent{\bf  $\bs{\vee}$:}

    \noindent Given a formula of the form $\varphi=\psi_1\vee\psi_2$, obtain $A_{\psi_1}$ and $A_{\psi_2}$ via the inductive hypothesis
    and let $A_\varphi=A_{\psi_1}\cup A_{\psi_2}.$

    Now, should $\alpha$ be a model satisfying $\varphi$ we can without loss of generality suppose it satisfies $\psi_1$. By inductive hypothesis
    $A_{\psi_1}$ accepts $\alpha$, hence $A_\varphi$ accepts $\alpha$. Similarly, should $\alpha$ be
    accepted by $A_\varphi$, it must be accepted by either $A_{\psi_1}$ or $A_{\psi_2}$ as required.

    \noindent{\bf  $\bs{\wedge}$:}

    \noindent Given a formula of the form $\varphi=\psi_1\wedge\psi_2$, obtain $A_{\psi_1}$ and $A_{\psi_2}$ via the inductive hypothesis
    and let $A_\varphi=A_{\psi_1}\cap A_{\psi_2}$. The desired property follows, mutatis mutandis, in the same manner as above.

    \noindent{\bf  $\bs{\exists x_i}$:}

    \noindent Given a formula of the form $\varphi=\exists x_i\psi$, obtain $A_\psi$ via the inductive hypothesis and project away the $i$th component
    to obtain $A_\varphi$. That is, $A_\varphi$ accepts $\alpha$ if and only if there is a way to reinsert the $i$th component into $\alpha$ such that
    the resulting $\alpha'$ is accepted by $A_\psi$.

    Suppose $\alpha$ is a model satisfying $\varphi$. It follows there must be a way to instantiate $x_i$ to satisfy $\psi$, meaning $\alpha$ can be expanded
    into $\alpha'$, which is accepted by $A_\psi$ by the induction hypothesis; in the case, $A_\varphi$ accepts $\alpha$ as desired. If  $A_\varphi$ accepts $\alpha$,
    then by definition of projection it must be the case that $\alpha$ can be expanded into an $\alpha'$ accepted by $A_\psi$, meaning $\alpha'$ is a model
    satisfying $\psi$ and hence $\exists\psi$ is satisfied by $\alpha'$ with the $x_i$ component ignored, namely $\alpha$.

    \noindent{\bf  $\bs{\exists X_i}$:}

    \noindent As before, but this time projecting away a second order variable.
    \end{proof}
    \begin{corollary}
      The languages definable by monadic first order logic over $(\in, \bs{s})$ on the natural numbers are included in the intersection of co-B\"uchi and deterministic B\"uchi recognisable languages.
    \end{corollary}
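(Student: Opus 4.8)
Fix a formula $\varphi'$ of monadic first order logic over $(\in,\bs{s})$ with $n$ free first order and $m$ free second order variables, and let $L\sub\{0,1\}^{n+m}$ be the language of models it defines. The co-B\"uchi inclusion requires nothing new: syntactically $\varphi'$ is already an $\ESOSs$ formula --- it matches the production $\varphi_1$, with an empty leading block of second order existential quantifiers --- so Theorem~\ref{thm:Conrad} yields a co-B\"uchi automaton recognising $L$.

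For the deterministic B\"uchi inclusion the plan is to complement and reuse Theorem~\ref{thm:Conrad}. I would rely on the standard fact that a language is deterministic B\"uchi recognisable precisely when its complement is co-B\"uchi recognisable; this holds because co-B\"uchi automata can be determinised (Miyano--Hayashi), and for a \emph{deterministic} automaton the B\"uchi condition on an acceptance set $F$ and the co-B\"uchi condition on $Q\setminus F$ are literal negations of one another. So it suffices to show that $\overline L$ is co-B\"uchi recognisable. The only subtlety is that $\overline L$, taken inside $\{0,1\}^{n+m}$, is not quite the language of $\neg\varphi'$: writing $V$ for the set of words all of whose $n$ first order tracks carry exactly one $1$, the formula $\neg\varphi'$ defines $V\setminus L$, so $\overline L=\overline V\cup(V\setminus L)$. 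The second set is co-B\"uchi recognisable by Theorem~\ref{thm:Conrad} applied to the monadic formula $\neg\varphi'$; the first is a finite union of conditions on individual tracks (``this track has no $1$'' and ``this track has at least two $1$s''), each recognised by a deterministic co-B\"uchi automaton, hence co-B\"uchi recognisable. As the co-B\"uchi recognisable languages are closed under union, $\overline L$ is co-B\"uchi recognisable, and therefore $L$ is deterministic B\"uchi recognisable. Together with the first paragraph this places $L$ in the claimed intersection.

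I do not anticipate a real obstacle. All of the substance is already in Theorem~\ref{thm:Conrad}; what remains is the closure of monadic first order logic under negation (immediate), the bookkeeping around $V$ above, and three textbook facts --- determinisability of co-B\"uchi automata, the consequent duality between the deterministic B\"uchi and the co-B\"uchi recognisable classes, and closure of the co-B\"uchi recognisable languages under union. The one point worth double-checking is that these are being applied under the convention adopted here, that a co-B\"uchi run is one eventually trapped in $F$; with that reading the $F$-versus-$Q\setminus F$ duality is exactly correct, and a reader who prefers can instead invoke Landweber's topological characterisation (co-B\"uchi recognisable $=$ $\omega$-regular and $F_\sigma$, deterministic B\"uchi recognisable $=$ $\omega$-regular and $G_\delta$) and run the same argument on the Borel side.
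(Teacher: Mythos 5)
Your proof is correct, but it takes a genuinely different route from the paper's. The paper argues by inspecting the construction inside Theorem~\ref{thm:Conrad}: the only source of non-determinism there is the projection used for existential quantifiers, and for first order $\exists$ (the only kind present in a monadic first order formula) this can be replaced by a direct deterministic construction mirroring the $\forall$ case; since the universal gadget has a rejecting sink and the existential gadget would have an accepting sink, every automaton produced is deterministic and weak, so its B\"uchi and co-B\"uchi acceptance conditions coincide and both inclusions fall out of one construction. You instead treat Theorem~\ref{thm:Conrad} as a black box: the co-B\"uchi inclusion comes from noting that a monadic first order formula is an $\ESOSs$ formula with an empty second order prefix, and the deterministic B\"uchi inclusion comes from applying the theorem to $\neg\varphi'$ (monadic first order logic being closed under negation), handling the ill-formed encodings in $\overline{V}$ separately, and invoking the duality between deterministic B\"uchi and co-B\"uchi recognisability via breakpoint determinisation or Landweber. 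What your route buys is rigour and independence from the internals of the earlier proof --- the paper's argument is only a sketch and never actually exhibits the deterministic construction for $\exists x_i$ --- and your bookkeeping around the set $V$ of valid encodings addresses a point the paper silently skips; the cost is importing determinisation of co-B\"uchi automata and the complementation duality as external facts, whereas the paper's route, if carried out, is self-contained and yields the slightly stronger conclusion that a single deterministic weak automaton recognises the language. The one step you elide is that the deterministic co-B\"uchi automaton for $\overline{L}$ must be completed with a non-accepting sink before dualising the acceptance condition, but that is routine.
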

    \begin{proof}
      The reader will first observe that non-determinism is only introduced by the existential quantifiers. In the case of first order $\exists$, this could
      be avoided by giving a similar construction as in the $\forall$ case.  Next, note that in the universal construction the rejecting state is a sink,
      whereas in the existential construction the accepting state would be a sink. Because of this, there is no difference between the B\"uchi and co-B\"uchi
      acceptance conditions.
    \end{proof}
\bibliographystyle{plain}
\bibliography{references}
\end{document}